\documentclass[conference, twocolumns, a4]{IEEEtran}

\usepackage[T1]{fontenc}
\usepackage[english]{babel}
\usepackage[utf8]{inputenc}


\usepackage{breakcites}
\usepackage[hyphenbreaks]{breakurl}
\usepackage{acronym}
\usepackage{graphicx}
\usepackage{epstopdf}
\usepackage{amsthm}
\usepackage{amsmath}
\usepackage{amssymb}
\usepackage{tabularx}
\usepackage[algoruled, vlined]{algorithm2e}
\usepackage{enumitem}
\usepackage{comment}
\setlist[description]{leftmargin=0cm,labelindent=0cm}
\usepackage{subfig}
\usepackage{setspace}
\usepackage{xcolor}

\newtheorem{theorem}{Theorem}

\newtheorem{proposition}[theorem]{Proposition}

\newcommand{\suml}{\sum\limits}

\begin{document}

\setlength{\textfloatsep}{10pt}
\setlength{\intextsep}{10pt}

%
\title{Overlay Routing for Fast Video Transfers in CDN}

\author{\IEEEauthorblockN{Paolo Medagliani\IEEEauthorrefmark{1}, Stefano Paris\IEEEauthorrefmark{1}, Jérémie Leguay\IEEEauthorrefmark{1}, Lorenzo Maggi\IEEEauthorrefmark{1}, Xue Chuangsong\IEEEauthorrefmark{2}, Haojun Zhou\IEEEauthorrefmark{2}}
	\IEEEauthorblockA{\IEEEauthorrefmark{1}Mathematical and Algorithmic Sciences Lab, France Research Center, Huawei Technologies Co. Ltd.\\
		20 Quai du Point du Jour, 92100 Boulogne-Billancourt, France\\
}
	\IEEEauthorblockA{\IEEEauthorrefmark{2}Carrier Software Unit, Huawei Technologies Co. Ltd., Nanjing, China\\
	}

}

\maketitle

\begin{abstract} 
Content Delivery Networks (CDN) are witnessing the outburst of video streaming (e.g., personal live streaming or Video-on-Demand) where the video content, produced or accessed by mobile phones, must be quickly transferred from a point to another of the network. Whenever a user requests a video not directly available at the edge server, the CDN network must 1) identify the best location in the network where the content is stored, 2) set up a connection and 3) deliver the video as quickly as possible. For this reason, existing CDNs are adopting an overlay structure to reduce latency, leveraging the flexibility introduced by the Software Defined Networking (SDN) paradigm. In order to guarantee a satisfactory Quality of Experience (QoE) to users, the connection must respect several Quality of Service (QoS) constraints. In this paper, we focus on the sub-problem 2), by presenting an approach to efficiently compute and maintain paths in the overlay network. Our approach allows to speed up the transfer of video segments by finding minimum delay overlay paths under constraints on hop count, jitter, packet loss and relay processing capacity. The proposed algorithm provides a near-optimal solution, while drastically reducing the execution time. We show on traces collected in a real CDN that our solution allows to maximize the number of fast video transfers.
\end{abstract}

\IEEEpeerreviewmaketitle

\section{Introduction}

Globally, IP video traffic is expected to represent 82 percent of all IP traffic (business and consumer) by 2020~\cite{intro_Cisco_VNI}. Internet video traffic is expected to grow fourfold from 2015 to 2020. While a large variety of Video on Demand (VoD) and video-streaming services has emerged in the past years, the field continues to evolve rapidly. The ways that people are watching video is constantly evolving and is driven by mobile usage. For instance, live streaming embedded in social media platforms is a relatively new phenomenon, but this technology is finding more and more support with services such as Facebook Live or Periscope. 

With the explosion of streaming services that deliver Internet video to the TV and other device endpoints, Content Delivery Networks (CDN) have prevailed as a dominant method to deliver such content. Globally, 72 percent of Internet video traffic will cross CDN by 2019. The largest over-the-top player Akamai currently has over 170,000 edge servers located in over 1300 networks in 102 countries~\cite{maggs2015algorithmic}. At a smaller scale, Internet service providers are also deploying their own infrastructure, referred to as Telco CDN, as an evolution of IPTV and VoD systems. CDN have been traditionally used to help content providers distributing static content at scale. They are typically composed of edge servers which are deployed as close as possible to end users and act as a proximity cache. However, to follow the evolution of usage towards  more dynamic and real-time services, CDN are evolving to support a large variety of content types which cannot always be cached, such as web applications, teleconferencing and live video streaming. 

Delivering content at scale over the Internet with latency and reliability constraints is a real challenge. Indeed, the Internet is best-effort with routing policies that do not address fined-grained needs of applications and that are often guided by business relationships on large traffic volumes. The Triangle Inequality Violation (TIV)~\cite{Lumezanu:2009}  is a well known consequence of such policies. The minimum delay path is almost never the one established by the underlying routing system. In addition, outages are happening all the time in the Internet due to cable cuts, misconfigurated routers, DDoS attacks, power outages, or natural disasters~\cite{nygren2010akamai}. Even if the Internet becomes flatter~\cite{chiu2015we} with content service providers buying direct connectivity closer to their end users, CDN operators are still fighting against TIV and best effort routing policies. In reaction, overlay networks, such as RON~\cite{andersen2002resilient}, have been introduced to provide low latency and reliable connectivity over the Internet. Similarly, CDN operators deploy a three-tier architecture composed of origin servers that create the content, edge servers, which clients access to consume the content, and an overlay network that is responsible for transporting the content from the origins to the edges. Clients request the content from the closest edge server, and the edge server in turn retrieves the requested content from the origin via the overlay network over the Internet. 

CDN solutions are composed of building blocks such as a caching system to store the most popular contents at the network edge, a load balancing system, integrated within a Domain Name System (DNS) server, to redirect client requests to the closest edge server and an overlay routing system to transport content at low latency and high reliability. The overlay routing system is invoked to find \emph{good} paths at a number of occasions. This is for instance required to connect a live streaming content producer to its consumers, or to retrieve segments of a non real-time video stream which are cached at a given edge sever. Following the ongoing transformation of network architectures with Software Defined Networks (SDN)~\cite{Nunes14}, CDN are adopting flow-oriented and centralized controllers~\cite{liu2012case} to manage video traffic especially. The (logically) centralized control aims at improving the Quality of Experience (QoE) perceived by end users. The challenge for such an overlay network controller is to quickly find paths in the overlay when new demands arrive and to maintain a good routing configuration over time so that the transfer time of video segments is minimized. 

This paper presents an efficient algorithm to maintain minimum delay overlay paths with multiple QoS constraints on capacity, jitter and packet loss. These optimization criteria have been carefully selected to speed-up the transfer of video segments, knowing that TCP or QUIC~\cite{quic} is used under HLS or MPEG-DASH~\cite{seufert2015survey} for live and regular streaming. While the optimization problem it solves is NP-Hard, we use tools from combinatorial optimization such as Lagrangian relaxation and column generation to quickly find a near-optimal approximation. Our algorithm is based on the decomposition of the original problem into two simpler problems, namely finding a  minimum delay path that satisfies all QoS constraints for a single demand (QoS path computation), and computing the best set of paths for all demands (constrained multi-commodity flow). We present an evaluation of this algorithm over measurements collected in a real Telco CDN. We extend the evaluation over a synthetic network to further highlight the performance of the algorithm. 

The rest of the paper is structured as follows.
Sec.~\ref{related} provides an overview of the related work. Sec.~\ref{formulation} introduces the system model considered and formulates the problem as an MILP. Sec.~\ref{algorithms} describes the algorithms that we propose to solve the admission and routing maintenance problems.
Sec.~\ref{results} illustrates the numerical evaluation of the proposed algorithms. Finally, concluding remarks are provided in Sec.~\ref{conclusion}.


\section{Related works}
\label{related}

Overlay routing has received a lot of attention, especially in the domains of peer-to-peer, conferencing and CDNs.

QoS routing for multi-layer description video streams has been proposed to compute disjoint overlay paths for each sub-stream to increase reliability~\cite{begen2005multi}. Thi et al.~\cite{thi2015qos} allocate all sub-streams at once by minimizing an estimate of the end-to-end video quality distortion. They minimize the end-to-end probability of video stall based on packet loss and latency.

Distributed path computation algorithms have been proposed for peer-to-peer conferencing applications to build and maintain application layer multicast trees~\cite{Hosseini2007}. They gradually maintain a tree for each multicast session by defining join and leave procedures. Conversely, our work suits for the CDN case with a centralized controller platform~\cite{liu2012case}.

Andreev el al. \cite{Andreev11} introduce an overlay network for live streaming with edge servers, reflectors and source nodes. They propose a linear relaxation and rounding based algorithm to select reflectors. As reflectors can split multimedia streams to serve multiple receivers, the algorithm build an overlay forest to connect sources to receivers. To increase reliability, they also ensure that each source is served by two reflectors. Their primary objective is to optimize the cost of the infrastructure. Similarly, Zhou et al.~\cite{zhou2015joint} proposed an algorithm to find capacity and delay bounded minimum cost overlay forests. 

Our work differs from state of art by focusing on delay minimization subject to constraints on jitter and packet loss. In addition, it considers that edge servers act at the same time as source, relay and end points in the overlay network. In this context, we provide an algorithmic framework for the online and global control of fast video transfers in CDN.


\section{Problem formulation}
\label{formulation}

This section introduces the system model and the path computation problem.

\subsection{System model} \label{secIII_system}


\begin{figure}[!t]
	\centering
	\includegraphics[width=0.43\textwidth]{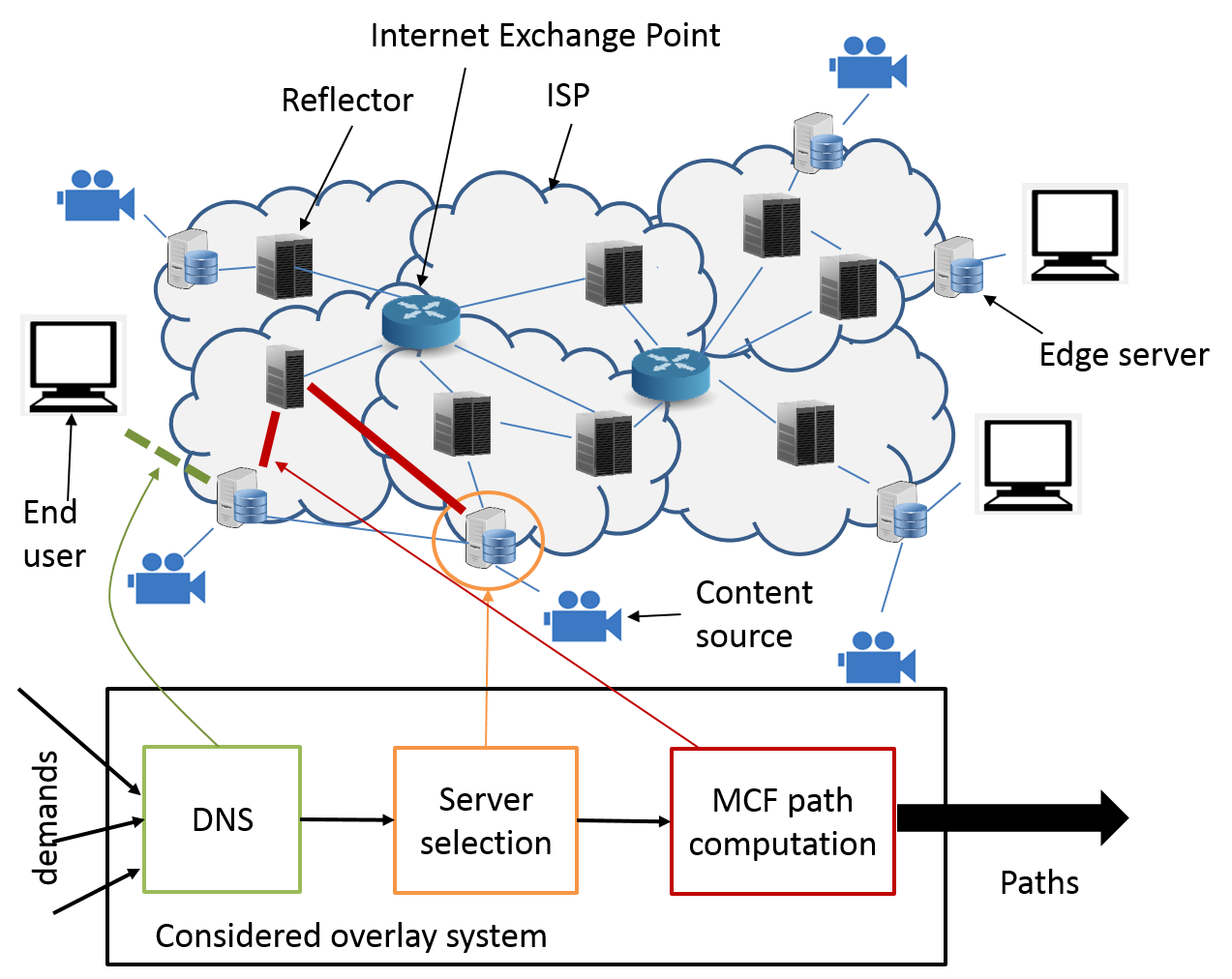}
	\caption{System model of a CDN overlay framework.}
	\label{fig:CDN_system}
\end{figure}

A typical CDN framework is depicted in Fig.~\ref{fig:CDN_system}.
We consider the presence of several content sources that stream videos, both live and on demand, to connected remote end users using the existing overlay network. Instead of connecting directly the end users to the source, end users connect to an edge server. 
In fact, since the edge server is able to replicate the same stream towards different end users, this results in a load reduction at sources and a better bandwidth utilization in the overlay. 
The choice of the best edge server which an end user must be connected to is typically handled by a DNS or HTTP proxy system. By leveraging different information such as geographical locations, content availability and edge server load, it redirects end user connection requests towards the most suitable edge server.
Once the best server has been identified, two cases arise: (i) the edge server has already the content available and it can serve the end user, (ii) the content is not available at the edge server. In the latter case, the edge server needs to retrieve the requested video from either the origin server or another edge server that already has the content available. 
To this end, it is necessary to compute the best overlay path to transfer the content to the edge server querying for it. Fig.~\ref{fig:CDN_system} shows a schematic description of the interactions among the different overlay components.

While existing overlays are composed of edge servers belonging to the same provider, we consider also the case where they can be extended with nodes placed at Internet eXchange Points (IXP) thanks the emerging technology of  Software-Defined Exchange (SDX). Fressancourt et al.~\cite{fressancourt2016kumori} have shown that IXP can be used as third-party routing inflection points to enhance an existing overlay network. This way, CDN overlays with even a few internal nodes can reach a high level of path diversity using external relays.

The system model presented above applies to two use cases: (i) video on demand (VoD) and (ii) personal live streaming (PLS). 
In the former scenario, end users request a video from a content provider. The goal of the system is to guarantee that the user is served as quickly as possible. If the content is already available at the edge server the problem is trivial and it only concerns the connection between end user and edge server. Instead, if the content is not directly available at the edge server, then the problem is equivalent to computing a path which minimizes the latency between two edge servers in the network and which respects some QoS constraints. The video content is either retrieved from the source or from the cache of another edge server.

In the PLS scenario, instead, content is generated and streamed by users, as it may happen for instance with Facebook Live. Other end users willing to watch the content connect to the overlay network in order to retrieve it. 
At this point, a similar situation to VoD arises. The overlay network first identifies the best edge server which the user must connect to; if the edge server does not have the content available, then it requests it from either the origin server or from another edge server~\cite{nygren2010akamai}. 
Once this choice has been made, the system will compute and maintain the fastest path between the content source and the demanding edge server. 
Throughout this paper, we consider that an external element has already chosen the best edge and origin servers from which the content must be retrieved. Hence, we will only focus on the path computation problem. 

In both cases, videos are streamed using HTTP Live Streaming or MPEG-DASH over TCP or QUIC. Minimizing transfer durations then translates into maximizing the throughput of each transport session. For TCP, the throughput can be approximated by the following formula $\frac{MSS.C}{RTT.\sqrt{p}}$~\cite{mathis1997macroscopic} which takes in to account the Maximum Segment Size (MSS), the Round-Trip Time (RTT) and the packet loss probability $p$. As a consequence, our path finding and maintenance algorithm aims at bounding packet loss and jitter while minimizing RTT. The parameters are continuously monitored by an active monitoring system. In addition, as the relaying of flows induces a burden on overlay nodes, we also consider a maximum amount of traffic that each one can process. 

We point out that the demands accepted by our algorithm will be routed in the CDN overlay, following the computed paths. For the refused demands, instead, they will be accepted anyway by the CDN overlay but using the direct path between origin and edge server in a best-effort way.

\subsection{Mathematical formulation}  \label{secIIIB_formulation}

In this paper we model the CDN overlay network as a weighted directed graph $G=(\mathcal N,\mathcal E)$, where $\mathcal N$ is the set of nodes and $\mathcal E$ denotes the set of edges. Each directed edge $(i,j)\in \mathcal E$ is characterized by its capacity $b_{i,j}$, delay $d_{i,j}$, jitter $z_{i,j}$ and successful packet transmission probability $f_{i,j}$. Each node $i\in \mathcal N$ has a maximum processing rate $N_i$.

\begin{table}[t!]
{\small
\centering
\begin{tabular}{| c | p{6.5cm} |}
	\hline
	\textbf{Symbol} & \textbf{Description} \\
	\hline
	$\mathcal{N}$ & Nodes (network devices). \\
	\hline
	$\mathcal{E}$ & Edges (network links). \\
	\hline 
	$\mathcal{K}$ & Set of demands (i.e., commodities). \\
	\hline
	$r^k$ & Transmission rate for demand $k\in\mathcal K$. \\
	\hline
	$d_{i_j}$ & Delay of edge $(i,j)\in \mathcal{E}$. \\
	\hline
	$b_{i,j}$ & Capacity of edge $(i,j)\in \mathcal{E}$. \\
	\hline
	$f_{i,j}$ & Prob. of successful transmission for edge $(i,j)$. \\
	\hline
	$F^k$ & Minimum prob. of successful transmission for $k$. \\
	\hline
	$z_{i,j}$ & Jitter of edge $(i,j)\in \mathcal{E}$. \\
	\hline
	$Z^k$ & Maximum jitter for demand $k\in\mathcal K$. \\
	\hline
	$N_i$ & Maximum processing rate for node $i\in\mathcal N$. \\
	\hline
\end{tabular}
}
\caption{Notations for input parameters.}
\label{tab:notation_input}
\end{table}

We consider a set $\mathcal{K}$ of video-streaming connection demands which need to be routed in the network. Demand $k$ is identified by a source node $s^k\in \mathcal N$, destination node $t^k\in \mathcal N$ and transmission rate $r^k$. Table~\ref{tab:notation_input} summarizes the notation used throughout the paper.
Our primary objective is to accept the maximum number of demands into the system; our secondary goal is to minimize the total delay, while each demand must fulfill all hard constraints on link and node capacity, jitter and packet loss probability. Moreover, at most one reflector can be used for each demand.
This translates into a Multi-Commodity Flow (MCF) problem, that can be expressed either via a link- or path-based formulation, according to the needs. We start off with the link-based formulation, that sets the decision variable $x^k_{i,j}=1$ if and only if the directed edge $(i,j)$ is used for routing demand $k$, i.e.,

{\small
\begin{align}
\min_{x} & \, \sum_{k\in\mathcal{K}} \sum_{(i,j)\in\mathcal{E}} x_{i,j}^k d_{i,j} + M \sum_{k\in\mathcal{K}} \sum_{(i,j)\in\overline{\mathcal{E}}} x_{i,j}^k \label{eq:objfun} \\
\mathrm{s.t.} & \, \sum_{j:(i,j)\in\mathcal E\cup \overline{\mathcal{E}}} x_{i,j}^k - \sum_{j:(j,i)\in\mathcal E\cup \overline{\mathcal{E}}} x_{j,i}^k = \gamma_i^k, \quad \forall\, i\in\mathcal{N}, k\in \mathcal K \label{eq:flow_conserv} \\
& \, \sum_{k\in \mathcal K} x_{i,j}^k r^k \le b_{i,j}, \quad \forall \, (i,j)\in \mathcal E \cup \overline{\mathcal{E}} \label{eq:capacity} \\
& \, \sum_{(i,j)\in\mathcal E \cup \overline{\mathcal{E}}} x_{i,j}^k z_{i,j} \le Z^k, \quad \forall\, k\in \mathcal K \label{eq:jitter} \\
& \, \sum_{k\in\mathcal K} \sum_{j:(i,j)\in\mathcal E \cup \overline{\mathcal{E}}} x_{i,j}^k r^k \le N_i, \quad \forall\, i\in\mathcal N \label{eq:processing1} \\
& \, \sum_{k\in\mathcal K} \sum_{j:(j,i)\in\mathcal E \cup \overline{\mathcal{E}}} x_{j,i}^k r^k \le N_i, \quad \forall\, i\in\mathcal N  \label{eq:processing2} \\
& \, \sum_{(i,j)\in\mathcal E \cup \overline{\mathcal{E}}} x_{i,j}^k \log f_{i,j} \ge \log F^k, \quad \forall\, k\in \mathcal K \label{eq:prob_succ} \\
& \, \sum_{(i,j)\in\mathcal E \cup \overline{\mathcal{E}}} x_{i,j}^k \le 2, \quad \forall\, k\in \mathcal K \label{eq:maxhop} \\
& \, x_{i,j}^k \in \{0,1\}, \quad \forall\,i,j,k. \label{eq:fin}
\end{align}
}
\normalsize
We remark that, via the formulation in (\ref{eq:objfun}-\ref{eq:fin}), we implicitly augmented the original graph $G$ into a clique, where all the new (artificial) edges $\overline{\mathcal E}$ have a large delay $M$. More specifically, $M$ should be set as larger than $2|\mathcal K|\max_{i,j}d_{i,j}$ in order to prioritize the maximization of the number of accepted demands over the delay minimization goal. 
Eq. \eqref{eq:flow_conserv} describes the standard flow conservation constraints, where $\gamma_i^k=1$ if $i=s^k$, $\gamma_i^k=-1$ if $i=t^k$ and $\gamma_i^k=0$ otherwise. Eqs. \eqref{eq:capacity} and \eqref{eq:jitter} account for the capacity and jitter hard constraints, respectively. $Z^k$ is the maximum jitter value for demand $k$.  Expressions \eqref{eq:processing1},\eqref{eq:processing2} ensure that each node $i$ processes traffic at a rate not exceeding $N_i$. Eq. \eqref{eq:prob_succ} claims that the probability of successful transmission for demand $k$ is at least $F^k$. Finally, Eq. \eqref{eq:maxhop} translates the requirement that at most one reflector is used for each demand, i.e., the maximum number of hops equals 2. 

\begin{proposition} \label{prop:np_har}
The overlay routing problem formalized in Eqs.\eqref{eq:objfun}-\eqref{eq:fin} is NP-Hard.
\end{proposition}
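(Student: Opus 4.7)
The plan is to prove NP-hardness through a polynomial-time reduction from the classical NP-complete \textsc{Partition} problem to the decision version of our overlay routing problem. The decision version asks whether the optimum of \eqref{eq:objfun}--\eqref{eq:fin} is strictly less than $M$, equivalently whether all demands can be admitted using only physical edges in $\mathcal{E}$.

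Given a \textsc{Partition} instance with positive integers $a_1,\ldots,a_n$ summing to $2B$, I would construct a four-node graph with source $s$, destination $t$, and two relays $r_1,r_2$, whose physical edge set is $\mathcal{E}=\{(s,r_1),(s,r_2),(r_1,t),(r_2,t)\}$. Each physical edge receives delay $1$, jitter $0$, success probability $1$, and capacity at least $2B$; the artificial edges in $\overline{\mathcal{E}}$ receive delay $M$, unit success probability, zero jitter, and unbounded capacity, so they remain always feasible. I would set the node-processing limits $N_{r_1}=N_{r_2}=B$ and $N_s=N_t=2B$, and create $n$ demands, each from $s$ to $t$, with rate $r^k=a_k$, jitter bound $Z^k=1$ and reliability threshold $F^k=1$. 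This construction is clearly polynomial in the input size of \textsc{Partition}.

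I would then argue the equivalence. By \eqref{eq:maxhop} and the topology of $\mathcal{E}$, any demand routed exclusively over $\mathcal{E}$ must follow either $s\to r_1\to t$ or $s\to r_2\to t$. By design, the edge-capacity, jitter and packet-loss constraints are all slack, so the only binding requirement becomes \eqref{eq:processing1}--\eqref{eq:processing2}, which forces the total rate through each of $r_1$ and $r_2$ to be at most $B$. Since the demand rates sum to $2B$, all $n$ demands can be admitted on $\mathcal{E}$ if and only if the $a_k$'s can be split into two sub-collections each summing to exactly $B$, i.e., if and only if \textsc{Partition} is a yes-instance. Since the prescription $M>2|\mathcal{K}|\max_{i,j}d_{i,j}$ makes admission lexicographically dominant over delay minimization in \eqref{eq:objfun}, the optimum of the MILP is strictly less than $M$ precisely when such a partition exists, so the optimization problem indeed encodes \textsc{Partition}.

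The main obstacle is the parameter bookkeeping: I must confirm that the link capacities, jitter bounds, reliability thresholds, and the penalty $M$ can be tuned simultaneously so that only the node-capacity constraints bind in the reduction, while the MILP remains feasible whenever an artificial edge is used. Once this is verified, the lexicographic argument on $M$ transfers the hardness of the decision question to the optimization itself, establishing NP-hardness.
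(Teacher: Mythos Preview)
Your reduction from \textsc{Partition} is correct: once the edge capacities, jitter and loss bounds are set so that only the relay processing limits $N_{r_1}=N_{r_2}=B$ can bind, every all-physical routing corresponds to a split of $\{a_1,\dots,a_n\}$ into two blocks of weight $B$, and the lexicographic role of $M$ turns this into the decision ``is the optimum $<M$?''. Feasibility of the MILP in the no-case is guaranteed by the artificial direct edge $(s,t)$, since the source and sink processing bounds were set to $2B$.

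This is, however, a genuinely different argument from the paper's. The paper does not build a gadget at all; it simply observes that by making the QoS constraints \eqref{eq:jitter}--\eqref{eq:maxhop} vacuous, the problem collapses to Multi-Commodity Integral Flow, which is NP-hard by the classical result of Even, Itai and Shamir. The paper's route is shorter and cites a standard hardness result, whereas your construction is self-contained and, notably, keeps the fixed two-hop constraint \eqref{eq:maxhop} active rather than ``neglected''. That is an advantage of your approach: since the hop bound in \eqref{eq:maxhop} is a hard constant $2$ and not an input parameter, it cannot literally be relaxed, so your gadget---in which every physical path has exactly two hops---shows NP-hardness of the problem \emph{as stated}, without appealing to a relaxation whose status as a special case is not entirely obvious.
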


\begin{proof}
We prove that overlay routing problem is NP-hard by considering a simplified instance of the problem where QoS constraints~\eqref{eq:jitter}-\eqref{eq:maxhop} are neglected.
In other words, we do not limit the set of feasible paths to those that satisfy the QoS constraints. 
In this case, the overlay routing problem becomes a Multi-Commodity Integral Flow problem, which is known to be NP-hard~\cite{even1975complexity}. Thus, the overlay routing problem contains an NP-hard problem as special case, which makes the overlay routing problem itself NP-hard.
\end{proof}

We now present the path-based formulation for our optimization problem, that is equivalent to the link-based one in Eqs. \eqref{eq:objfun}-\eqref{eq:fin}. As explained in the next section, this formulation enables the decomposition of the original overlay routing problem into simpler and smaller sub-problems that can be solved more efficiently.
We define $\mathcal P^k$ as the set of paths from source $s^k$ to destination $t^k$ fulfilling the constraints on number of hops ($\le 2$), jitter ($\le Z^k$) and probability of successful transmission ($\ge F^k$). We call $\mathcal P_i^k\subseteq \mathcal P^k$ the set of paths visiting node $i\in\mathcal N$. Similarly, $\mathcal P_e^k\subseteq \mathcal P^k$ is the set of paths crossing edge $e\in\mathcal E$. Moreover, let $d_p$ be the total delay of path $p$, by accounting that edges not in $\mathcal E$ have delay $M$. Then, the path-based formulation for MCF writes as follows:

{\small
\begin{align}
\min_{y} & \, \sum_{k\in\mathcal K} \sum_{p\in\mathcal P^k} y_p d_p  \label{eq:path:objfun}\\
\mathrm{s.t.} & \, \sum_{k\in\mathcal K} \sum_{p\in\mathcal P^k_e} y_p r^k\le b_e , \quad \forall\,e\in\mathcal E \label{eq:path:capacity} \\
& \, \sum_{k\in\mathcal K} \sum_{p\in\mathcal P^k_i} y_p r^k\le N_i , \quad \forall\,i\in\mathcal N \label{eq:path:processing} \\
& \, \sum_{p\in\mathcal P^k} y_p = 1, \quad \forall\, k\in\mathcal K \label{eq:path:single_path} \\
& \, y_p \in\{0,1\}
\end{align}
}
\normalsize
where $y_p=1$ whenever path $p\in\mathcal P^k$ is used for $k\in\mathcal K$.

We observe that the two set of constraints~\eqref{eq:path:capacity}-\eqref{eq:path:processing} represent the transmission and processing capacity limits of links and nodes, respectively. 
%


\section{Path finding algorithm for fast transfers }
\label{algorithms}
As illustrated in the previous section the overlay routing problem is NP-Hard. Therefore, the computational time steeply increases with the network size and the number of demands.
To solve the overlay routing problem even in large scale scenarios, we propose to decompose the original problem into simpler subproblems that can be solved more efficiently.
More specifically, we first relax the integrality constraints for the variable $y_{p}$, enabling the use of multiple paths for the routing of each demand, and then we design a column generation algorithm~\cite{NET:NET3230140406} to solve the underlying MCF problem.
In order to deal with the QoS constraints, we use a pseudo-polynomial algorithm to generate only shortest paths that satisfy constraints~\eqref{eq:jitter},~\eqref{eq:prob_succ}, and~\eqref{eq:maxhop}.
Finally, we design a randomized method to assign a single path to those demands whose LP solution consists in splitting the traffic over multiple paths.

A general description of the steps of the proposed overlay routing method, referred to as \emph{Column Generation-Generalized LARAC} (CG-GLC), is illustrated in Alg.~\ref{alg:cg_alg}. In the next subsections we provide more details. 

\begin{algorithm}[htb]
	\small
	\caption{CG-GLC}
	\label{alg:cg_alg}
	\KwIn{$d_{e}$, $b_{e}$,$G\left(\mathcal{N}, \mathcal{E}, w(\cdot) \right)$}
	\KwOut{$\boldsymbol{y}$}
	\tcc{Routing}
	\nl Find an initial feasible solution $y_{p}$ for the reduced master problem~\eqref{eq:path:objfun}-\eqref{eq:path:processing}\label{alg:cg_alg:init}\;
	\nl Compute the dual point $\boldsymbol{\mu} = \left[ \boldsymbol{\lambda}, \boldsymbol{\sigma} \right]$ corresponding to $\boldsymbol{y}$\;
	\nl \While { $\boldsymbol{\mu}$ is not feasible }{
		\For { $k \in \mathcal{K}$ }{
			Generate a graph $G\left(\mathcal{N}, \mathcal{E}, w(\cdot) \right)$ with links weights equal to $w_{e} = d_{e} + r_{k} \lambda_{e}$\;
			Compute the constrained shortest path $p$ over $G$\label{alg:cg_alg:csp}\;
			\If { $\sigma_{k} \leq \suml_{e \in p} w_{e}$  }{ \tcc{This var. improves \eqref{eq:path:objfun}}
				Add $y_{p}$ to the reduced master problem~\eqref{eq:path:objfun}-\eqref{eq:path:processing} (primal problem)\;
			}
		}
		Solve the new reduced master problem and get the new solution $\boldsymbol{y}$\;
		Compute the dual point $\boldsymbol{\mu}$ corresponding to $\boldsymbol{y}$\label{alg:cg_alg:end}\;
	}
	\tcc{Rounding}
	\nl\label{alg:cg_alg:rounding} \While { solution $\boldsymbol{y}$ has changed }{
		\For { $k \in \mathcal{K}$, $\exists p \in \mathcal P_{k} : 0 < y_{p} < 1$ }{
			Select path $p$ with probability $y_{p}$\;
			\If {any of the capacity constraints \eqref{eq:path:capacity} is violated }{
				Restore the value of $y_{p}$\hspace{2ex}$\forall p \in \mathcal P_{k}$\;
			} \Else {
			Set $y_{p} = 1$\;
			Set $y_{q} = 0$\hspace{2ex}$\forall q \neq p$\;
			Reduce the capacity of link in the path $p$ by $r_{k}$\;
		}
	}
}
return $\boldsymbol{y}$\;
\end{algorithm}
\normalsize

\subsection{Solving the Constrained MCF problem}
The column generation technique enables us to consider only a subset of decision variables in the primal formulation at each iteration, and it uses the dual formulation to include only those variables that can improve the objective function.
Indeed, from the duality theory we know that every feasible point to the dual problem $\boldsymbol{\mu}^{*}$ gives a lower bound on the optimum value of the primal $\boldsymbol{y}^{*}$, and every feasible point to the primal problem $\boldsymbol{y}^{*}$ gives an upper bound on the optimal value of the dual $\boldsymbol{\mu}^{*}$.
Therefore, a feasible primal solution $\boldsymbol{y}^{*}$ is optimal if the corresponding dual point $\boldsymbol{\mu}^{*}$ is feasible. 
Our algorithm exploits this property in order to consider, for each demand, only a small set of variables representing feasible paths and add new variables to the primal formulation as long as the corresponding dual solutions are unfeasible.
In our problem, the vector of dual variables $\boldsymbol{\mu} = \left[ \boldsymbol{\lambda}, \boldsymbol{\sigma} \right]$ is split into two different sets of variables, where $\boldsymbol{\lambda}$ corresponds to the capacity constraints~\eqref{eq:path:capacity}-\eqref{eq:path:processing} and $\boldsymbol{\sigma}$ corresponds to constraint~\eqref{eq:path:single_path}, which indicates that a subset of $\mathcal{P}^k$ is used to route a demand.

We underline that, as long as the paths generated during the column generation procedure satisfy all QoS constraints, the final solution computed by our algorithm is optimal for the LP relaxation of the overlay routing problem, in the sense that each demand is fully satisfied by one or multiple paths.
To this aim, we use the GEN-LARAC (GLC) algorithm as a subroutine for solving the constrained shortest path problem~\cite{XiThXu05}, since it computes in pseudo-polynomial time a path that satisfies all QoS constraints~\eqref{eq:jitter},~\eqref{eq:prob_succ}, and~\eqref{eq:maxhop}.
Therefore, at the end of the column generation algorithm we only have  to choose a single path for each demand, without reconsidering all constraints of the original problem.

If $\boldsymbol{\lambda} \geq 0$, the constraints of the dual problem can be formulated as follows:
\begin{equation}
	\sigma_{k}^{*} - \suml_{e \in p} r_{k} \lambda_{e}^{*} \leq d_{p},	\quad  \forall k \in \mathcal{K}, p \in \mathcal P_{k},
\label{eq:path:dual_constr}
\end{equation}
Eq.~\eqref{eq:path:dual_constr} states that in a feasible point of the dual problem $\boldsymbol{\mu}^{*} = \left[ \boldsymbol{\lambda}^{*}, \boldsymbol{\sigma}^{*} \right]$, there exists no path for any demand such that 
$\sigma_{k}^{*} > \suml_{e \in p} r_{k} \lambda_{e}^{*} + d_{p}$, which can be rewritten as $\sigma_{k}^{*} >  \suml_{e \in p}\left( r_{k}  \lambda_{e}^{*} + d_{e}\right)$. 
In other words, there exists no path for any demand that can further improve the objective function~\eqref{eq:path:objfun}.
In contrast, if such a path exists, i.e., $\exists k \in \mathcal{K}, p \in \mathcal P_{k}$:~$\sigma_{k}^{*} > r_{k}  \suml_{e \in p}\left( r_{k}  \lambda_{e}^{*} + d_{e}\right)$, 
then the dual point $\boldsymbol{\mu}^{*}$ is unfeasible and the objective function of the primal problem can be further reduced by considering such a path (recall that the primal solution is an upper bound of the dual solution).

All we need to do at each iteration of the column generation algorithm is checking whether the condition 
$\sigma_{k}^{*} > r_{k}  \suml_{e \in p} \left( r_{k}  \lambda_{e}^{*} + d_{e}\right)$ holds for all possible paths of all demands (i.e., $\forall \, p \in \mathcal P_{k}$).
We observe that the computation of $\suml_{e \in p} \left( r_{k}  \lambda_{e}^{*} + d_{e}\right)$ can be performed efficiently using an algorithm for the constrained shortest path computation on the weighted graph 
$G\left(\mathcal{N}, \mathcal{E}, w(\cdot) \right)$, where the link weight function $w(\cdot):\mathcal{E}\rightarrow \mathbb{R}_{\geq 0}$ is computed as $w_{e} = r_{k} d_{e} + \lambda_{e}$.

We point out that, in order to maintain the feasibility of this routine, we also add some \emph{dummy paths} on which demands, rejected by column generation, can be allocated. 

\subsection{Rounding procedure}

To address the splitting of demands over multiple paths, which can be caused by the resolution of the LP relaxation of the problem \eqref{eq:path:objfun}-\eqref{eq:path:processing}, we introduce a randomized rounding phase at the end of the column generation algorithm, in order to convert the fractional solution into a feasible integer solution.
Whenever multiple paths are used to route a demand from its origin to its destination, a single route is selected with probability equal to the portion of the demand allocated to each specific path.
In particular, for each demand $k$ that has been split, we consider all possible paths where $y_{p} > 0$ and we select a path $p$ according to the overall flow allocated to it.
If the whole demand $k$ can be transmitted at its nominal rate $r_{k}$ over the selected path $p$ without violating any capacity constraint, we keep only the path $p$ by fixing the variable $y_{p} = 1$ and all other variables corresponding to alternative paths to $0$. Finally, we reduce the capacity of the links that belong to $p$ by the demand's rate $r_{k}$.
The randomized routing procedure terminates either when a single path has been allocated to all demands or when the solution $\boldsymbol{y}$ does not change between two consecutive iterations. In this latter case, all demands that are still split over multiple paths are rejected.


\subsection{Multicast Overlay routing}

Our CDN framework can be adapted to solve the multicast overlay routing problem, typical of scenarios where overlay nodes can treat video streams at application layer by decapsulating, processing, and re-encapsulating them before forwarding. In this case, the operator can save  bandwidth in the overlay network by building a multicast tree among the source and the nodes that are interested in the same content.
A demand $k$ is therefore identified by a source node $s^k\in \mathcal N$, multiple destination nodes $\mathcal T^k \subset \mathcal N$ and a transmission rate $r^k$. 
The goal is to connect a single source $s^k$ to several destinations $t^k \in \mathcal T^k$ using the multicast tree that satisfies all QoS constraints and has the minimum delay.
To this end, we redefine $\mathcal P^k$ as the set of trees connecting source $s^k$ to its destinations $t^k \in \mathcal T^k$ fulfilling all QoS constraints, while binary variable $y_p$ indicates whether the tree $p \in \mathcal P^k$ has been selected.
The main difference with respect to the previous overlay routing problem dwells in the subroutine used to generate constrained trees with minimum delay. In this case, this problem becomes an extension of the Steiner tree problem, which is known to be APX-complete. Therefore, we cannot compute a multicast tree arbitrarily close to optimum in polynomial time, but we can only use schemes that compute a solution in polynomial time within an constant approximation factor like~\cite{byrka2010improved}.
We observe that even if the algorithm will not converge to the optimal fractional solution of the multicast overlay routing problem, it can still improve the starting solution computed by a heuristic approach like the one proposed in \cite{zhou2015joint}.

\section{Performance Results}
\label{results}

In this section, we analyze the performance our Column Generation (CG) approach based on GLC (CG-GLC). 
In order to show the distance of CG-GLC from the optimal solution, also called ``optimality gap'', we use the solution of the ILP presented in Section \ref{secIIIB_formulation} as a benchmark.

Results are evaluated in two different scenarios: (i) \textit{Telco CDN} where we used traces collected on a real overlay network of a Telco CDN and (ii) \textit{Synthetic CDN} where we generated a larger random network. We present a performance evaluation over time considering as indicators the percentage of accepted demands, the running time and the average delay. 
We point out that in both scenarios, nodes can be either source/destination of a demand or reflector if the demand is not originated in or destined to the considered node. Without loss of generality, for each demand, we considered $F^k> 99\%$ and 	$Z^k< 2$ s.

All the tests have been executed on a machine with Intel Core i7-5600U \@ 2.6 GHz with 8 GB of RAM. The C++ libraries used to build and update the network graph have been provided by Lemon. The ILP and the resolution of the reduced master problem have been carried out using CPLEX.

\subsection{Real CDN Overlay Network} \label{sec:cdn_overlay_results}

In this subsection, first we present the traces used for simulations and then we compare the performance of CG-GLC and ILP via numerical evaluations.
In Table~\ref{tab:network_parameters_CDN}, we list the main parameters considered in the CDN overlay scenario. 
\begin{table}[t!]
	\centering
	\small
	\begin{tabular}{|c|c|l|}	
		\hline
		$\mathcal{N}$ & 11 & Number of nodes\\
		\hline	
		$\mathcal{E}$  & 82 & Number of links \\
		\hline
		$b_{i,j}$ & 50  & Link capacity [Gbps]\\
		\hline
		$N_i$ & 150  & Node processing capacity [Mbps]\\
		\hline
		$\mathcal{K}$ & 82 & Number of demands \\
		\hline
		$r_k$ & 50 & Transmission rate [Mbps]\\
		\hline
		$F^k$ & 99\% & Minimum prob. of successful transmission \\
		\hline
		$Z^k$ & 2 & Maximum jitter for demand $k$ [s]\\
		\hline
	\end{tabular}
	\caption{Parameters for the Telco CDN scenario.}
	\label{tab:network_parameters_CDN}
\end{table} 

Transmission rate and node processing capacity are set to 50 Mbps and 150 Mbps, respectively, which leads to a scenario where nodes are fairly stressed.
The link QoS metrics used for our experiments, as delay, jitter and packet loss, are described in Subsection~\ref{sec:dataset}.	

\subsubsection{Dataset presentation} \label{sec:dataset}

We used real traces from a Telco CDN operator. Since the QoS of the considered overlay network varies over time, a monitoring system carries out periodic probing between overlay nodes. In this subsection, we show the results of QoS metrics monitoring considering an aggregation period of 5 minutes.

\begin{figure}[!ht]
	\centering
	\subfloat[][Distribution of pairwise average delay with bins of 5 ms.]{\includegraphics[width=0.48\textwidth]{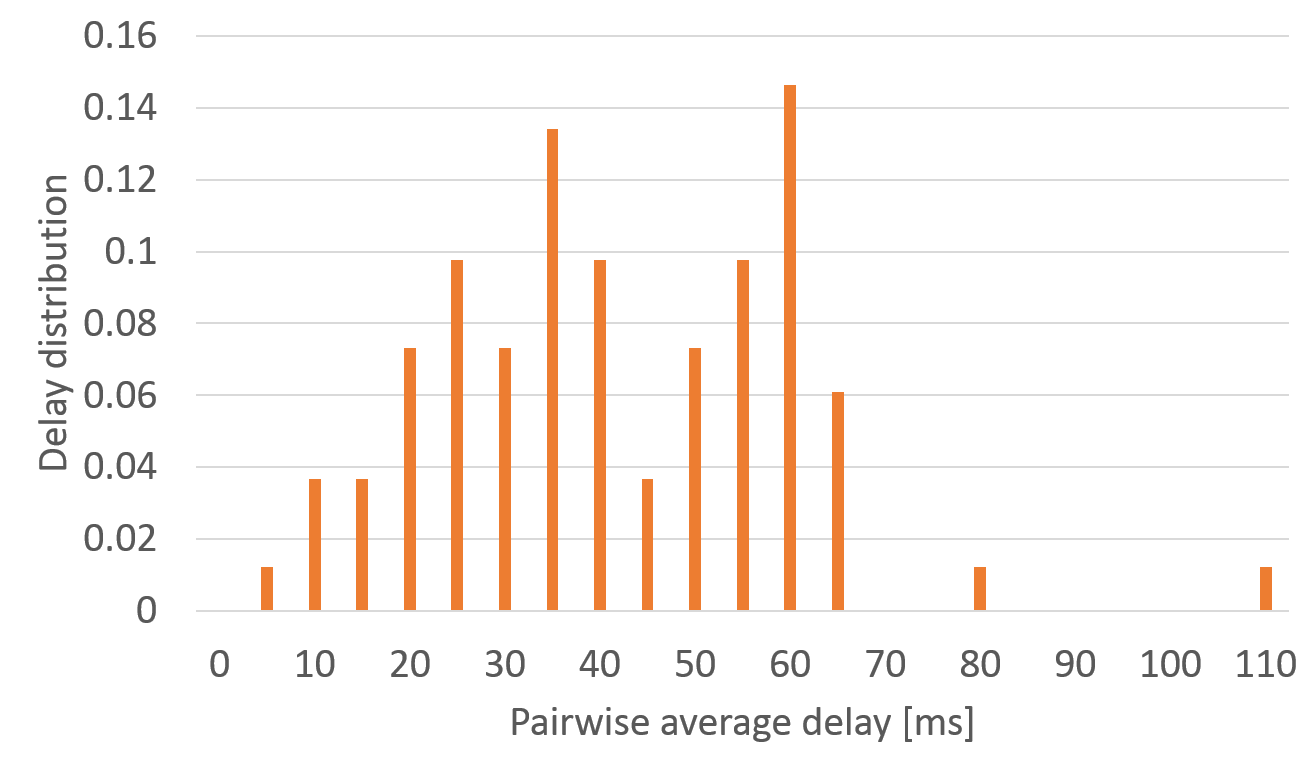}\label{fig:delay_distribution} } \\	
	\subfloat[][Distribution of pairwise average jitter with bins of 5 ms.]{\includegraphics[width=0.48\textwidth]{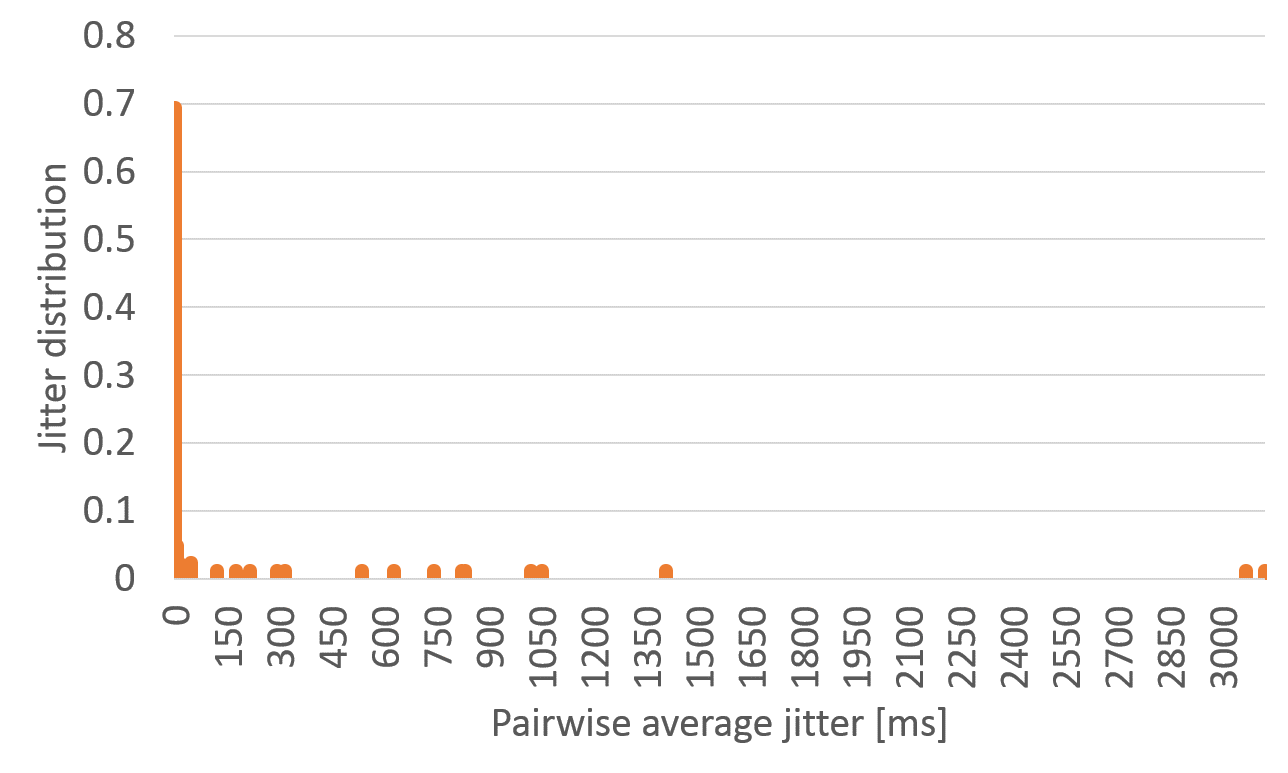}\label{fig:jitter_distribution}}  \\
	\subfloat[][Distribution of pairwise average packet loss with bins of  0.1 \%.]{\includegraphics[width=0.48\textwidth]{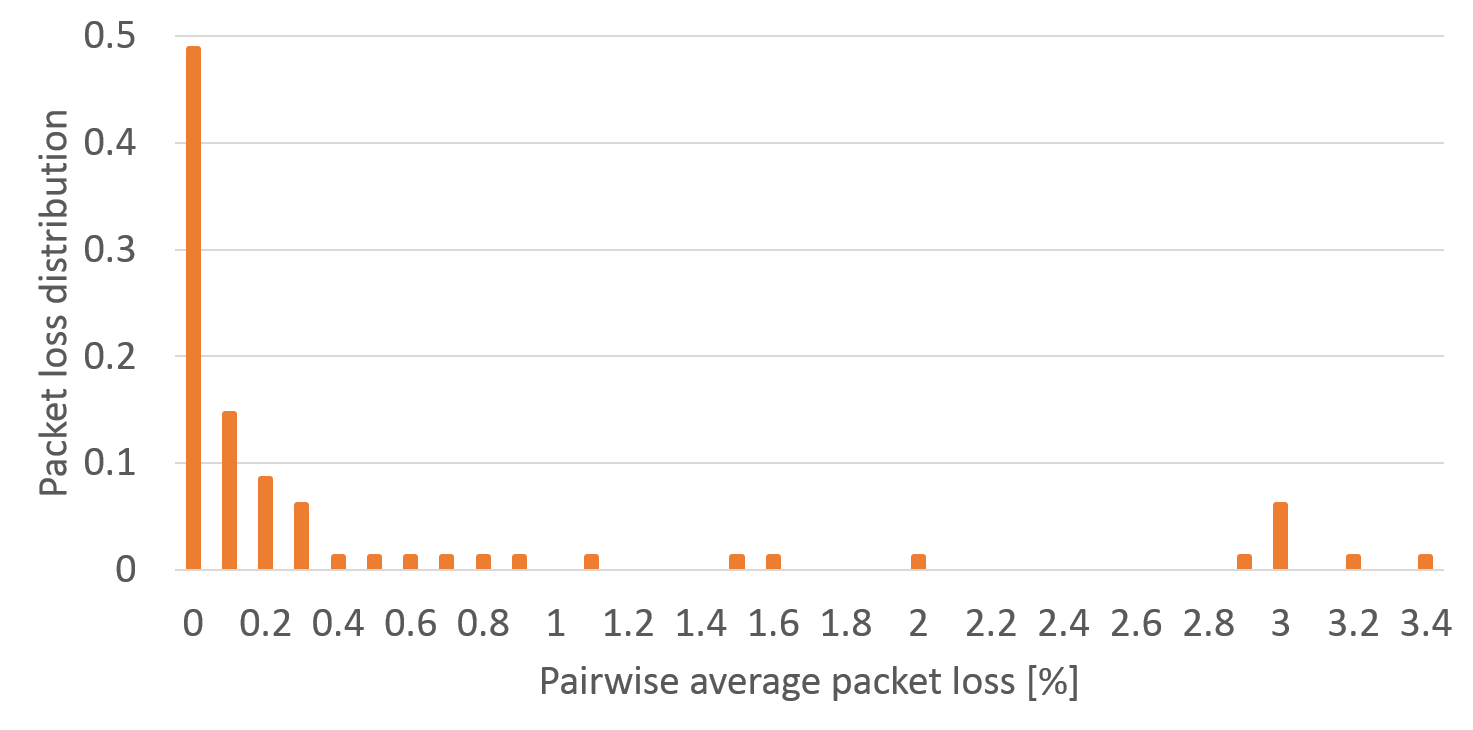}\label{fig:ploss_distribution}}
	\caption{Overlay link statistics of the Telco CDN overlay.}
	\label{fig:traces_histogram}
\end{figure}

In Fig.~\ref{fig:delay_distribution}, we show the pairwise distribution of the average delay. We can see that the support of delay distribution is 5-80 ms, while only few links present larger delay. The jitter distribution presented in Fig.~\ref{fig:jitter_distribution} shows that the jitter is massively concentrated in the first bins. In fact, 90\% of the jitter samples are smaller than 330 ms.
Finally, as shown in Fig.~\ref{fig:ploss_distribution}, most of the links have an average packet loss smaller than 0.3\%, although 6\% of the links have a packet loss of 3\%. This is due to local perturbations that caused consistent packet losses.

\subsubsection{Performance Evaluation}

We now compare the performance of CG-GLC and the solution of ILP in two scenarios. The former is the one described before, where an overlay network is considered (\emph{overlay} in the Figures). The latter neglects the presence of the overlay, while only relying on the direct path between source and destination (\emph{direct}).

We compare CG-GLC and ILP solutions in terms of percentage of accepted demands and run-time in Fig.~\ref{fig:accDMD_cdn} and Fig.~\ref{fig:runtime_cdn}, respectively. At each time instant we try to allocate all the 82 demands. As the network capacity is not saturated, the node processing capacity is the real bottleneck of the system (i.e., constraints~\eqref{eq:processing1} and~\eqref{eq:processing2}). 

\begin{figure}[!ht]
	\centering
	\includegraphics[width=0.48\textwidth]{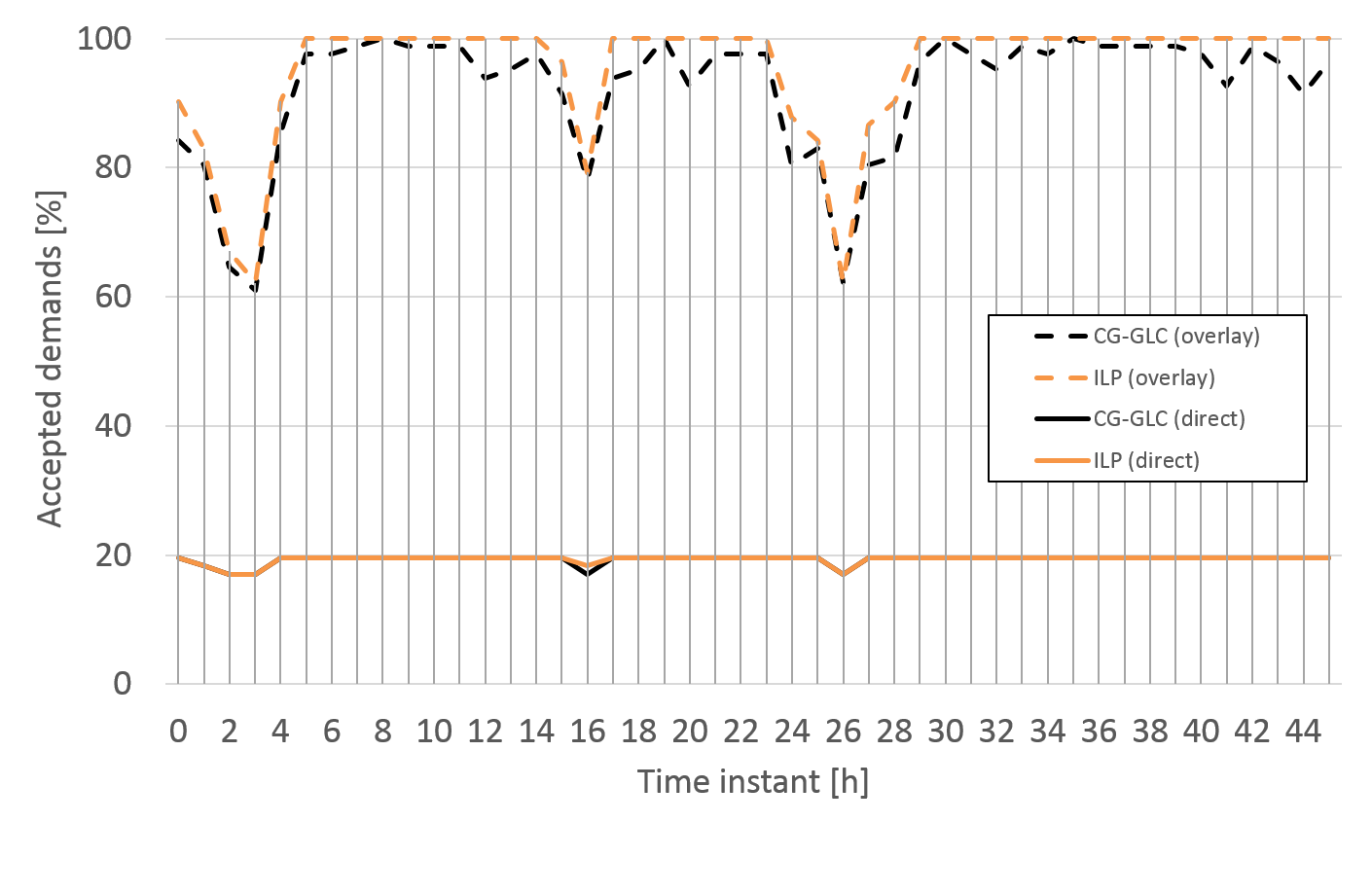}
	\caption{Percentage of accepted demands over time for the Telco CDN scenario.}
	\label{fig:accDMD_cdn}
\end{figure}

We first remark that while through the overlay (\emph{overlay}) we manage to accept in average 90\% of the demands, without the overlay (\emph{direct}) we cannot accept more than 20 \% of the demands. Accepted demands are flows for which QoS constraints in term of jitter and packet loss are met. This result highlights the benefit of overlay networking compared to the case when one relies only on the underlay. Indeed, it increases path diversity and the chances to find feasible paths. 

Secondly, we observe that our CG-GLC solution strikes a good performance/complexity trade-off. In fact, CG-GLC is characterized by an average optimality gap of 3\% in terms of  percentage of accepted demands. Moreover, CG-GLC has a running time around 10 times smaller than ILP solution, that was produced by standard commercial software CPLEX. This is due to the fact that CG-GLC, instead of solving the whole problem, focuses only on a subset of the original problem (faster to be solved), by adding to it only the solutions (i.e., paths) which improve the objective function. 
We point out that the CG-GLC can compute an entire network reconfiguration in less than 200 ms.

As the number of accepted demands is not the same for CG-GLC and ILP, it is not possible to carry out a fair comparison between the two approaches in terms of average delay.
As shown in Fig.~\ref{fig:avgDelay_cdn}, the average delay without overlay (\emph{direct}) is smaller than the one with the overlay (\emph{overlay}).  This is because the number of accepted demands is smaller without the overlay and, according to the CG routine, CG-GLC tries to use first paths with low delay but, as long as new demands are accepted, they are routed on more ``expensive'' paths.


\begin{figure}[!ht]
	\centering
	\includegraphics[width=0.48\textwidth]{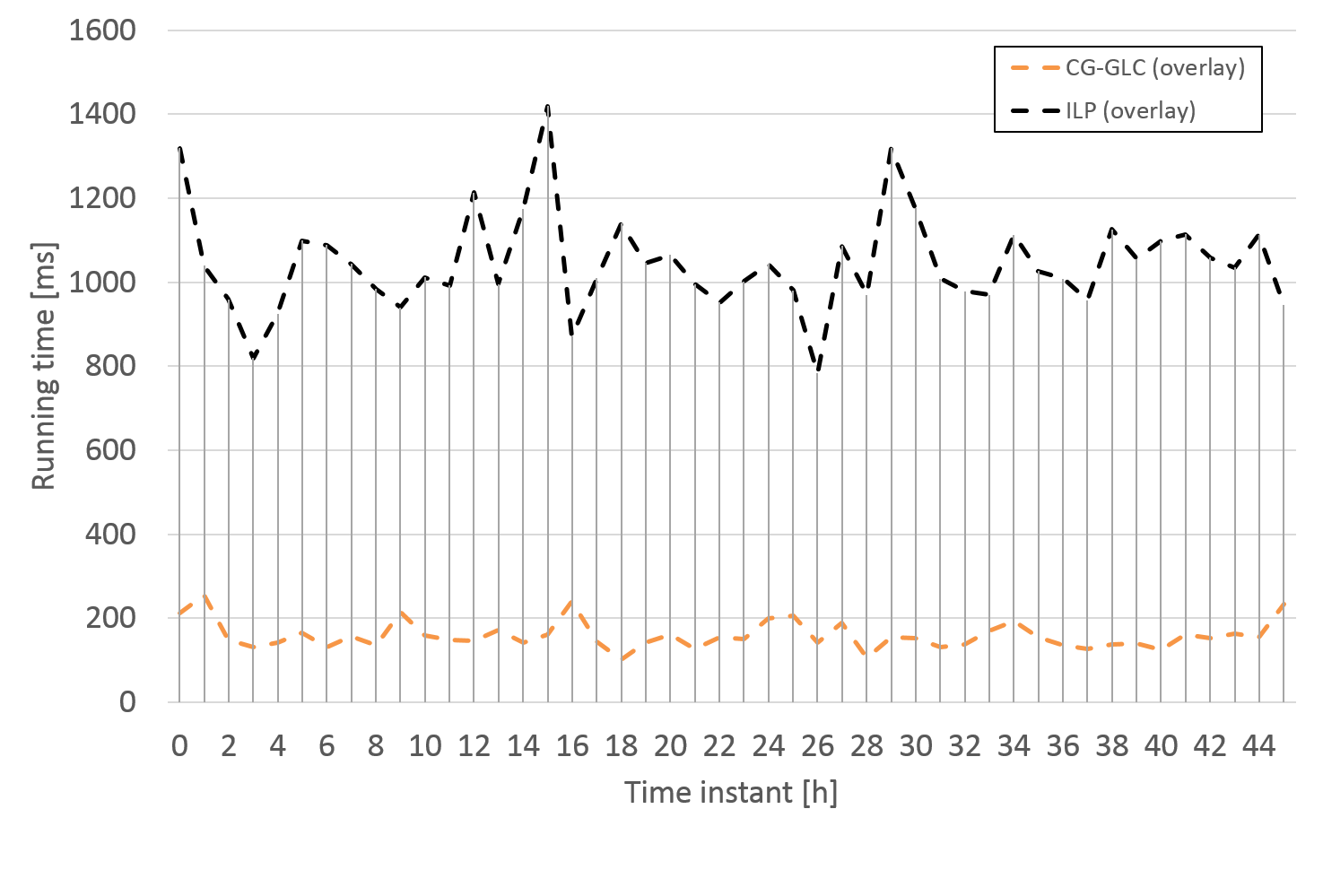}
	\caption{Running time of the CG-GLC and ILP over time for the Telco CDN scenario.}
	\label{fig:runtime_cdn}
\end{figure}



\begin{figure}[!ht]
	\centering
	\includegraphics[width=0.42\textwidth]{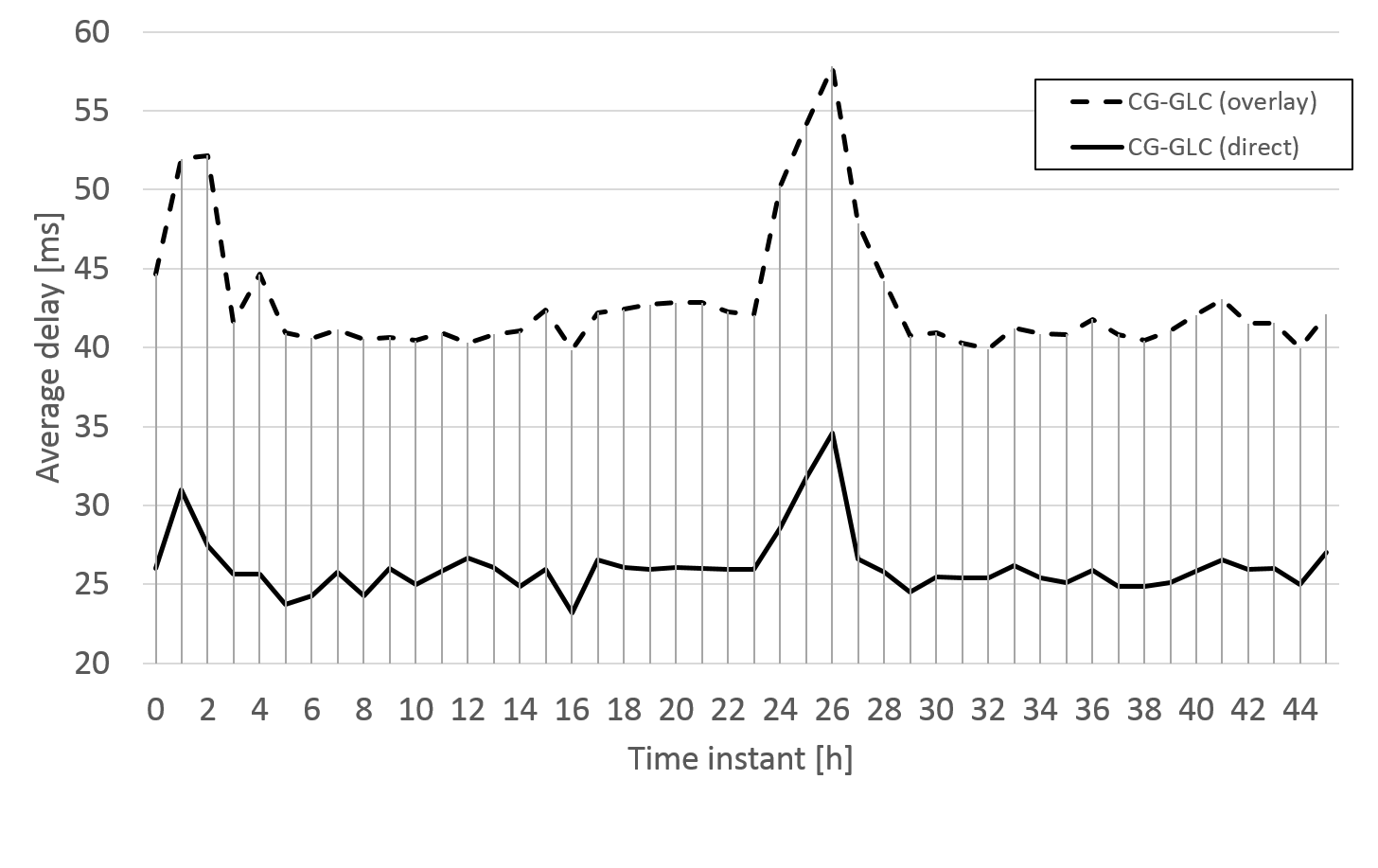}
	\caption{Average delay over time for CG-GLC for the Telco CDN scenario.}
	\label{fig:avgDelay_cdn}
\end{figure}


\subsection{Synthetic Overlay Network}
\label{sec:synthetic}

In this section, we further evaluate the performance of the proposed CG-GLC algorithm, in terms of percentage of accepted demands and running time, on a synthetic CDN network generated with a Barabasi model. In this scenario as well, we compare the results with (\emph{overlay}) and without (\emph{direct}) overlay network.
We point out that the ILP results are only provided for less than 150 demands, due to simulation time constraints: the ILP solution running time quickly explodes for bigger instances.
In Table~\ref{tab:network_parameters_barabasi}, we show the parameters of this experiment. 
For each demand, source and destination nodes are randomly chosen among the set of nodes. Transmission rate of demands are distributed as an exponential random variable of parameter $50$~Mbps.

\begin{table}[t!]
	\centering
{\small
	\begin{tabular}{|c|c|l|}	
		\hline
		$\mathcal{N}$ & 50 & Number of nodes\\
		\hline	
		$\mathcal{E}$  & 450 & Number of links \\
		\hline
		$b_{i,j}$ & $\sim Unif\,[300,500]$ & Link capacity [Mbps]\\
		\hline
		$d_{i,j}$ & $\sim Unif\,[1,500]$ & Link delay [ms]\\
		\hline
		$\alpha_{i,j}$ &$\sim Unif\, [0,50] $& Link jitter [ms]\\
		\hline
		$f_{i,j}$ &$\sim Unif\,[0,0.2]$ & Link packet loss\\
		\hline
		$N_i$ & 400 & Node processing capacity \\
		&&[Mbps] \\
		\hline
		$\mathcal{K}$ & [90, 110, ... , 210] & Number of demands \\
		\hline
		$r_k$ & $\sim exp(50)$ & Transmission rate [Mbps]\\
		\hline
		$Z^k$&$\sim Unif\, [25,200]$& Max jitter [ms]\\
		\hline
		$F^k$&$\sim Unif\, [0.1,0.8]$& Max packet loss\\
		\hline
	\end{tabular}
}
	\caption{Network parameters for the Synthetic CDN scenario.}
	\label{tab:network_parameters_barabasi}
\end{table}

In Fig.~\ref{fig:accDMD_barabasi}, the percentage of accepted demands is presented as a function of the number of demands. 
\begin{figure}[!ht]
	\centering
	\includegraphics[width=0.42\textwidth]{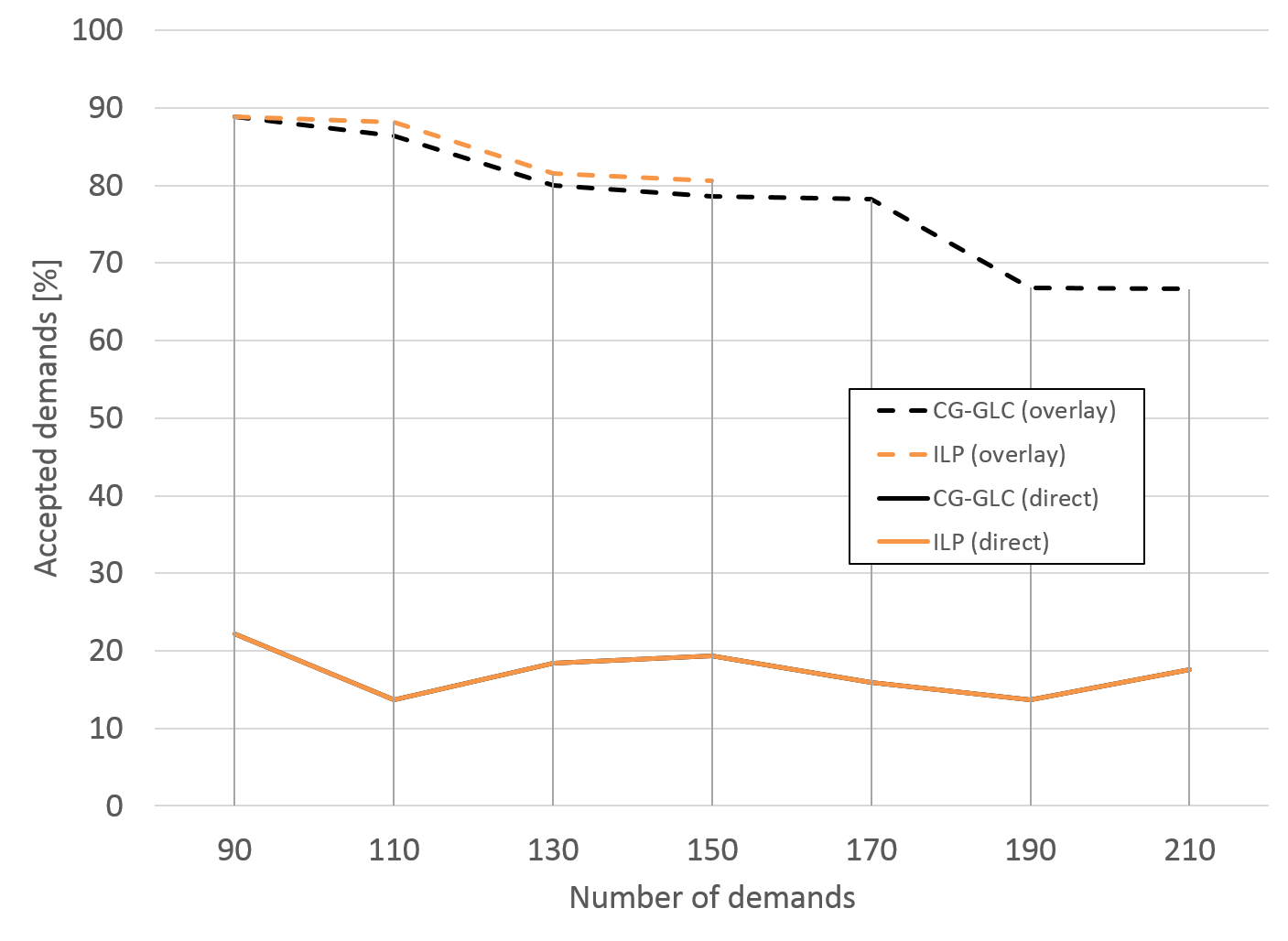}
	\caption{Percentage of accepted demands as a function of the number of demands for the Synthetic CDN.}
	\label{fig:accDMD_barabasi}
\end{figure}
Both with and without overlay network, the two algorithms have the similar performance as they manage to allocate the same number of demands, meaning that CG-GLC is very close (less than 2 \%) to the optimum provided by the ILP. However, when the load increases (i.e., $\mathcal{K}>150$), in the scenario with overlay network the ILP is no longer able to provide solutions in reasonable time, while CG-GLC confirms to be a valid approach. In the case without overlay, instead, many demands are rejected because of node processing capacity or absence of direct links between source and destination. For such a reason, CPLEX is able to solve the ILP even for larger sets of demands.

In Fig.~\ref{fig:runtime_barabasi} we compare the two algorithms in terms of running time for the network with overlay.
\begin{figure}[!ht]
	\centering
	\includegraphics[width=0.42\textwidth]{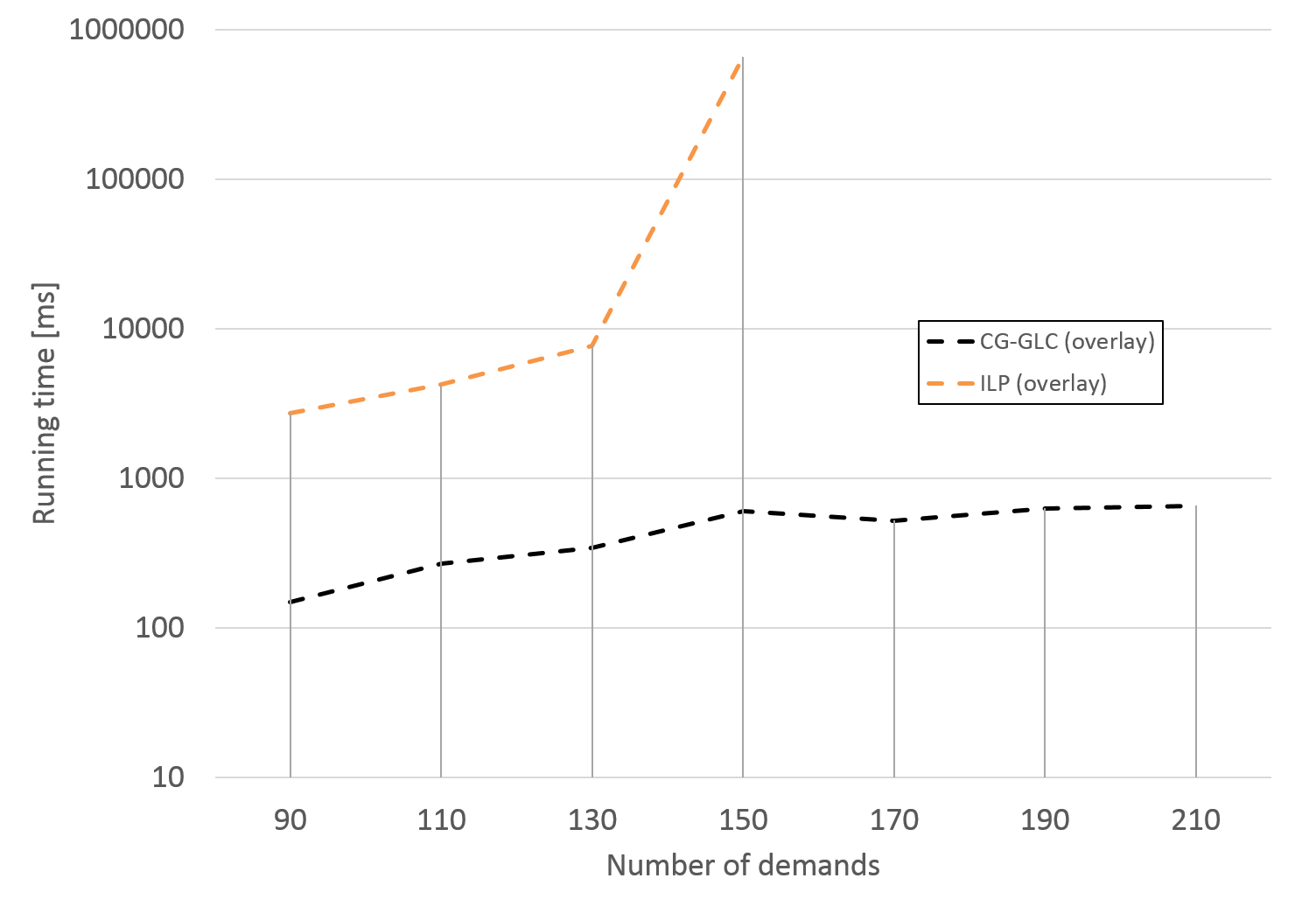}
	\caption{Running time of the two algorithms as a function of the number of demands for the Synthetic CDN scenario.}
	\label{fig:runtime_barabasi}
\end{figure}
Also in this case, CG-GLC is faster than the ILP because the column generation routine is designed to  solve a smaller problem than the one solved by the ILP. In particular, as the running time of the ILP grows exponentially with the number of demands, we can rely on the ILP solver only for scenarios with less than 150 demands. The running time of the CG-GLC grows exponentially as well, but more slowly, making this approach valid for larger networks.

\section{Concluding remarks} \label{conclusion}

In this paper, we have addressed the problem of fast video delivery in CDN. We propose an algorithm to nearly optimally allocate and maintain paths in the overlay network. 
More specifically, we formulate the problem as a multi-commodity flow under several QoS constraints derived from the requirements of CDN overlays networks. 
The proposed solution applies to the fast video delivery problem for both personal live streaming and Video-on-Demand use cases. Our approach, based on column generation and randomized rounding, has been tested against the optimal solution computed by solving the associated ILP formulation with commercial software. 
We used real traces from a Telco CDN and a random network.
Results show that our approach is an excellent compromise in terms of running time and optimality.

\bibliographystyle{IEEEtran}

\begin{thebibliography}{10}
\providecommand{\url}[1]{#1}
\csname url@samestyle\endcsname
\providecommand{\newblock}{\relax}
\providecommand{\bibinfo}[2]{#2}
\providecommand{\BIBentrySTDinterwordspacing}{\spaceskip=0pt\relax}
\providecommand{\BIBentryALTinterwordstretchfactor}{4}
\providecommand{\BIBentryALTinterwordspacing}{\spaceskip=\fontdimen2\font plus
\BIBentryALTinterwordstretchfactor\fontdimen3\font minus
  \fontdimen4\font\relax}
\providecommand{\BIBforeignlanguage}[2]{{%
\expandafter\ifx\csname l@#1\endcsname\relax
\typeout{** WARNING: IEEEtran.bst: No hyphenation pattern has been}%
\typeout{** loaded for the language `#1'. Using the pattern for}%
\typeout{** the default language instead.}%
\else
\language=\csname l@#1\endcsname
\fi
#2}}
\providecommand{\BIBdecl}{\relax}
\BIBdecl

\bibitem{intro_Cisco_VNI}
Cisco, ``{Cisco Visual Networking Index: Forecast and Methodology,
  2014–2019},'' May 2015.

\bibitem{maggs2015algorithmic}
B.~M. Maggs and R.~K. Sitaraman, ``Algorithmic nuggets in content delivery,''
  \emph{ACM SIGCOMM CCR}, vol.~45, no.~3, pp. 52--66, 2015.

\bibitem{Lumezanu:2009}
C.~Lumezanu, R.~Baden, N.~Spring, and B.~Bhattacharjee, ``Triangle inequality
  variations in the internet,'' in \emph{{Proc. ACM IMC}}, 2009.

\bibitem{nygren2010akamai}
E.~Nygren, R.~K. Sitaraman, and J.~Sun, ``The akamai network: a platform for
  high-performance internet applications,'' \emph{ACM SIGOPS Operating Systems
  Review}, vol.~44, no.~3, pp. 2--19, 2010.

\bibitem{chiu2015we}
Y.-C. Chiu, B.~Schlinker, A.~B. Radhakrishnan, E.~Katz-Bassett, and
  R.~Govindan, ``Are we one hop away from a better internet?'' in \emph{{Proc.
  ACM IMC}}, 2015.

\bibitem{andersen2002resilient}
D.~Andersen, H.~Balakrishnan, F.~Kaashoek, and R.~Morris, ``Resilient overlay
  networks,'' \emph{ACM SIGCOMM CCR}, vol.~32, no.~1, 2002.

\bibitem{Nunes14}
B.~A.~A. Nunes, M.~Mendonca, X.~N. Nguyen, K.~Obraczka, and T.~Turletti, ``A
  survey of software-defined networking: Past, present, and future of
  programmable networks,'' \emph{IEEE Com. Surveys Tutorials}, 2014.

\bibitem{liu2012case}
X.~Liu, F.~Dobrian, H.~Milner, J.~Jiang, V.~Sekar, I.~Stoica, and H.~Zhang, ``A
  case for a coordinated internet video control plane,'' in \emph{{Proc. ACM
  SIGCOMM}}, 2012.

\bibitem{quic}
J.~Iyengar, I.~Swett, R.~Hamilton, and A.~Wilk, ``{QUIC: A UDP-Based Secure and
  Reliable Transport for HTTP/2},'' Internet-Draft
  draft-tsvwg-quic-protocol-02, Jan. 2016.

\bibitem{seufert2015survey}
M.~Seufert, S.~Egger, M.~Slanina, T.~Zinner, T.~Ho{\ss}feld, and P.~Tran-Gia,
  ``A survey on quality of experience of http adaptive streaming,'' \emph{IEEE
  Communications Surveys \& Tutorials}, vol.~17, no.~1, pp. 469--492, 2015.

\bibitem{begen2005multi}
A.~C. Begen, Y.~Altunbasak, O.~Ergun, and M.~H. Ammar, ``Multi-path selection
  for multiple description video streaming over overlay networks,'' \emph{Sig.
  Proc.: Image Com.}, vol.~20, no.~1, pp. 39--60, 2005.

\bibitem{thi2015qos}
T.~M. Thi, T.~Huynh, and W.-J. Hwang, ``Qos-enabled streaming of multiple
  description coded video over openflow-based networks,'' \emph{Nonlinear
  Theory and Its Applications, IEICE}, vol.~6, no.~2, pp. 144--159, 2015.

\bibitem{Hosseini2007}
M.~Hosseini, D.~T. Ahmed, S.~Shirmohammadi, and N.~D. Georganas, ``A survey of
  application-layer multicast protocols,'' \emph{Commun. Surveys Tuts.},
  vol.~9, no.~3, Jul. 2007.

\bibitem{Andreev11}
K.~Andreev, B.~M. Maggs, A.~Meyerson, J.~Saks, and R.~K. Sitaraman,
  ``Algorithms for constructing overlay networks for live streaming,''
  \emph{CoRR}, vol. abs/1109.4114, 2011.

\bibitem{zhou2015joint}
F.~Zhou, J.~Liu, G.~Simon, and R.~Boutaba, ``Joint optimization for the
  delivery of multiple video channels in telco-cdns,'' \emph{{IEEE transactions
  on network and service management}}, vol.~12, no.~1, pp. 87--100, 2015.

\bibitem{fressancourt2016kumori}
A.~Fressancourt, C.~Pelsser, and M.~Gagnaire, ``{Kumori: Steering Cloud traffic
  at IXPs to improve resiliency},'' in \emph{{Proc. of DRCN}}, 2016.

\bibitem{mathis1997macroscopic}
M.~Mathis, J.~Semke, J.~Mahdavi, and T.~Ott, ``The macroscopic behavior of the
  tcp congestion avoidance algorithm,'' \emph{ACM SIGCOMM CCR}, vol.~27, no.~3,
  pp. 67--82, 1997.

\bibitem{even1975complexity}
S.~Even, A.~Itai, and A.~Shamir, ``On the complexity of time table and
  multi-commodity flow problems,'' in \emph{Foundations of Computer Science,
  1975., 16th Annual Symposium on}.\hskip 1em plus 0.5em minus 0.4em\relax
  IEEE, 1975, pp. 184--193.

\bibitem{NET:NET3230140406}
J.~Desrosiers, F.~Soumis, and M.~Desrochers, ``Routing with time windows by
  column generation,'' \emph{Networks}, vol.~14, no.~4, pp. 545--565, 1984.

\bibitem{XiThXu05}
Y.~Xiao, K.~Thulasiraman, and G.~Xue, ``{GEN-LARAC: A generalized approach to
  the constrained shortest path problem under multiple additive constraints},''
  in \emph{Proc. of ISAAC}, 2005, pp. 92--105.

\bibitem{byrka2010improved}
J.~Byrka, F.~Grandoni, T.~Rothvo{\ss}, and L.~Sanit{\`a}, ``{An improved
  LP-based approximation for Steiner tree},'' in \emph{{Proc. of ACM STOC}},
  2010.

\end{thebibliography}

\end{document}